\newtheorem{theorem}{Theorem}
\crefname{table}{Table}{Tables}
\crefname{figure}{Figure}{Figures}
\crefname{theorem}{Theorem}{Theorems}
\crefname{corollary}{Corollary}{Corollaries}
\crefname{observation}{Observation}{Observations}
\crefname{lemma}{Lemma}{Lemmas}
\crefname{example}{Example}{Examples}
\crefname{reduction}{Reduction}{Reductions}
\crefname{construction}{Construction}{Constructions}
\crefname{subsection}{Subsection}{Subsections}
\crefname{section}{Section}{Sections}
\theoremstyle{definition}
\newtheorem{definition}{Definition}
\crefname{definition}{Definition}{Definitions}
\newcommand{\dspace}{\ensuremath{\mathds{R}^{d}}}
\newcommand{\ppp}{{\cal P}}
\newcommand{\vvv}{{\cal V}}
\newcommand{\aaa}{{\cal A}}
\newcommand{\rrr}{{\cal R}}
\newcommand{\dde}[1][]{{\ifthenelse{\equal{#1}{}}{$d$}{$#1$}-dimensional Euclidean}\xspace}
\newcommand{\dder}[1][]{{\ifthenelse{\equal{#1}{}}{$d$}{$#1$}-dimensional Euclidean embedding}\xspace}
\newcommand{\dEuclid}[1][]{{\ifthenelse{\equal{#1}{}}{$d$}{$#1$}-Euclidean}\xspace}
\newcommand{\dis}[1]{\ensuremath{\|#1 \|}}
\newcommand{\pref}{\ensuremath{\succ}}
\newcommand{\pos}{\ensuremath{E}}
\newcommand{\rank}{\ensuremath{\mathsf{rk}}}
\newcommand{\myemph}[1]{{\color{green!30!black}\emph{#1}}}
\newcommand{\vect}[1]{\ensuremath{\boldsymbol{#1}}}
\newcommand{\px}{\ensuremath{\vect{x}}}
\newcommand{\pp}{\ensuremath{\vect{p}}}
\newcommand{\pq}{\ensuremath{\vect{q}}}
\begin{document}
\newcommand{\mytitle}{2-Dimensional {E}uclidean Preferences}%

\title{\mytitle}

\author{Laurent Bulteau$^1$ \and Jiehua Chen$^2$}
\date{\small $^1$LIGM, CNRS, Universit{\'e} Gustave Eiffel, France\\
$^2$ TU Wien, Austria}
\maketitle

\begin{abstract}
  A preference profile with $m$~alternatives and $n$~voters is $2$-dimensional Euclidean 
  if both the alternatives and  the voters can be placed into a $2$-dimensional space 
  such that for each pair of alternatives, every voter prefers the one  
  which has a shorter Euclidean distance to the voter.
  We study how $2$-dimensional Euclidean preference profiles depend on the values~$m$ and $n$.
  We find that any profile with at most two voters or at most three alternatives is $2$-dimensional Euclidean
  while for three voters, we can show this property for up to seven alternatives.
  The results are tight in terms of Bogomolnaia and Laslier~\cite[Proposition 15(1)]{BoLa2006}.
\end{abstract} 

\section{Introduction}\label{sec:intro}
Using spatial models to analyze a preference profile (i.e., voters' preferences over alternatives) is a popular approach in political and social sciences, economics, and psychology. 
Here, the idea is to consider the voters and alternatives as points in a $d$-dimensional Euclidean space such that a voter prefers one alternative~$x$ over the other alternative~$y$ if and only if her Euclidean distance to~$x$ is smaller than to~$y$.
Such preference profiles are referred to as \myemph{\dEuclid} preference profiles~\cite{Coombs1964,BoLa2006}. 
Clearly, not every preference profile is Euclidean.
For $d=1$, checking whether a given preference profile is Euclidean can be done in polynomial time~\cite{DoiFal1994,Knoblauch2010,ElkFal2014}, 
while it is beyond NP for every fixed $d\ge 2$~\cite{Peters2017}.
In terms of characterization using forbidden subprofiles, Chen et al.~\cite{ChePruWoe2017} show that \dEuclid[1] preference profiles cannot be characterized via finitely many finite forbidden subprofiles.
Nevertheless,  Chen and Grottke~\cite{ChenGrottke2021} show that for two voters, a preference profile is \dEuclid[1] if and only if it is single-peaked, 
and for five voters,  single-peakedness and single-crossing are equivalent to 1-Euclideanness.
Since single-peaked and single-crossing preferences can be characterized via forbidden subprofiles~\cite{BH11,BCW12},
\dEuclid[1] preference profiles with up to five voters as well.
Bogomolnaia and Laslier show that a preference profile is \dEuclid\ if
the number of voters is at most $d$ or the number of alternatives is at most $d+1$.
Moreover, they provide a preference profile with $d+1$ voters and $2^{d+1}$ alternatives which is not \dEuclid and show that every preference profile with $d+1$ voters and $d+2$ alternatives is \dEuclid.
Notice that there is a gap in the size of \dEuclid\ and non-\dEuclid preference profiles.
Aiming to close this gap, we search for minimally non-\dEuclid[2] profiles. %
We obtain that a preference profile with $n$ voters and $m$~alternatives is always \dEuclid[2] if and only if $n\le 2$, or $n\le 3$ and $m \le 7$.
See \cref{fig:intervals} for an overview.

The paper is organized as follows:
\cref{sec:defi} introduces necessary definitions and notations.
In \cref{sec:2d}, we give explicit embeddings (i.e., points for the voters and the alternatives) for showing that any preference profile with $2$ voters or $3$ alternatives is \dEuclid[2]; note that this result has already been obtained by Bogomolnaia and Laslier~\cite{BoLa2006}.
In \cref{sec:exper}, we run our own heuristic and show that any preference profile with three voters and six alternatives is \dEuclid[2].
We conclude with future research directions in \cref{sec:conclude}. 

\section{Definitions and notations}
\label{sec:defi}

Given a non-negative integer~$t$, we use \myemph{$[t]$} to denote the set~$\{1,\ldots,t\}$.
Let $\aaa\coloneqq \{a_1,\ldots,a_m\}$ be a set of $m$~alternatives 
and let $\vvv\coloneqq \{v_1,\ldots,v_n\}$ be a set of $n$~voters. 
A \myemph{preference profile}~$\ppp\coloneqq (\aaa, \vvv, \rrr=(\pref_1, \ldots, \pref_n))$ specifies the \myemph{preference order}s of the voters in $\vvv$, 
where voter~$v_i$ ranks the alternatives according
to a strict linear order~$\pref_i$ over $A$. 
For alternatives~$a$ and $b$, the relation
$a\pref_i b$ means that voter~$v_i$ strictly prefers $a$ to $b$. $\{a,
b\}\pref_i c$ means that voter~$v_i$ strictly prefers $a$ and $b$ to $c$, but
the preference relation between alternatives~$a$ and $b$ is arbitrary but
unique. We also assume that all voters in a preference profile have pairwise
different preference orders.

For three distinct real values~$x, y, z$, by $x < \{y, z\}$ we mean that~$x <
y$ and $x < z$, and by $\{x, y\} < z$ we mean that $x < z$ and $y < z$.

By convention, we use bold~$\px$ to denote a vector or a point in a Euclidean space, and we use $\px[i]$ to refer to the $i^{\text{th}}$~value in~$\px$.
Given two points~$\pp, \pq$ in a $d$-dimensional space~$\dspace$, we write
\myemph{$\dis{\pp-\pq}_2$} to denote the Euclidean distance between $\pp$ and $\pq$, that is, 
$\dis{\pp-\pq}_2 \coloneqq \sqrt{\sum_{i\in [d]}(\pp[i]-\pq[i])^2}$.

\paragraph{\dder{s}.}
Generally speaking, the Euclidean representation models the preferences of voters over the alternatives 
using the Euclidean distance between an alternative and a voter. A
shorter distance indicates a stronger preference.
\begin{definition}[\dder{s}]
  Let $\ppp \coloneqq (\aaa\coloneqq \{a_1, \ldots, a_m\}$, $\vvv\coloneqq \{v_1, \ldots, v_n\}$, $\rrr\coloneqq(\pref_1, \ldots, \pref_n))$ be a
  preference profile. Let~$\pos\coloneqq \aaa \cup \vvv \to \dspace$ be a function. %
  A voter~$v_i \in \vvv$ is \emph{\dde} with respect to~$\pos$
  if for any two distinct alternatives~$a, b \in \aaa$, it holds that 
  \[a \pref_i b \text{ if and only if } \dis{\pos(a) - \pos(v_i)}_2 < \dis{\pos(b) - \pos(v_i)}_2. \]
  
  Function~$\pos$ is a \emph{\dder} of profile~$\ppp$ if each voter of~$\vvv$ is
  $d$-dimensional Euclidean with respect to $\pos$.
  A profile is \emph{d-Euclidean} if it admits a \dder.
\end{definition}

\begin{figure}[t]
  \centering \def\sep{0.4}
  \begin{tikzpicture}[draw=black!70, xscale=0.25, yscale=-0.25]
   \begin{scope}
     \node[] at (-4,-2){Worst-case}; 
     \node[] at (-4,0){dimension}; 
     
     \node[] at (18,-2){Number of Alternatives $(m)$};
     \node[] at (-4,6){Number};  \node[] at (-4,8){of voters};    \node[] at (-4,10){$(n)$};
     
     \foreach \nc in {1,...,8} \node[] at (\nc*2,0) {$\nc$}; 
     \foreach \x/\nc in {9.5/\cdots,11/15,12/16, 13.5/\cdots} \node[] at (\x*2,0) {$\nc$};       
     \foreach \nv in {1,...,6} \node[] at (0,\nv*2) {$\nv$}; 
     \node[] at (0,14) {$\vdots$}; 
     
     \draw (1,-3)--(1,15);
     \draw (-8,1)--(30,1);

     \tikzset{ every path/.style={draw=black, color=black,line width=2pt}}
     
     \tikzset{ }
     \node at (4,4) {1D};
     \draw (5,15) -- (5,5) -- (7,5) -- (7,3) -- (30,3);      
     \node[] at (6,6) {$\bullet$};
     \node[] at (8,4) {$\bullet$};
     
     \tikzset{ every path/.style={draw=blue, color=blue,line width=2pt}}
     \node at (8,6) {2D};       
     \draw (7,15) -- (7,7) -- (15,7) -- (15,5) -- (30,5); %
     \node[] at (8,8) {$\bullet$};
     \node[] at (16,6) {$\bullet$};    
     
     \tikzset{ every path/.style={draw=red, color=red,line width=2pt}}
     \node at (12,8) {3D};     
     \node at (19,8) {{\bf ?} (3 or 4)};     
     \draw (9,15) -- (9,9) -- (15,9) -- (23,9) -- (23,7) -- (30,7);  \draw (15,9) -- (15,7) -- (23,7) -- (30,7); 
     \node[] at (10,10) {$\bullet$};
     \node[] at (24,8) {$\bullet$};

     \tikzset{ every path/.style={draw=black, color=black,line width=2pt}}
 
     \node at (19,12) {$\geq$4D};     
     
   \end{scope}
  
   \end{tikzpicture}
  \caption{Boundaries of non-\dEuclid\ profiles with a given number of voters and alternatives. Each colored bullet point denotes the existence of such a non-Euclidean profile for the corresponding dimension.} %
  \label{fig:intervals}
\end{figure}
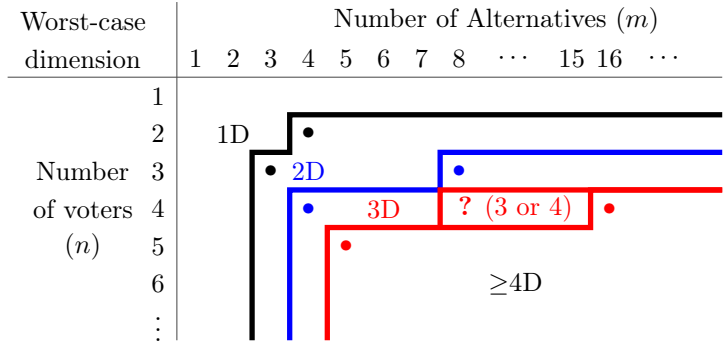

\section{Small \dEuclid[2] preference profiles}
\label{sec:2d}

In this section, we show two positive results, namely that any preference profile with at most two voters or three alternatives is \dEuclid[2].

\begin{theorem}
  Any preference profile with two voters is \dde[2].
\end{theorem}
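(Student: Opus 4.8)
The plan is to exhibit an explicit embedding. First I would pin the two voters on the $x$-axis, setting $\pos(v_1) \coloneqq (0,0)$ and $\pos(v_2) \coloneqq (1,0)$, so their distance is $D \coloneqq 1$. Since the order in which a voter $v_i$ ranks the alternatives depends only on the distances $\dis{\pos(a)-\pos(v_i)}_2$, the task splits into two parts: (a) pick a \emph{target radius} $r_1(a)$ for each alternative $a$ so that the ordering of the $r_1$-values is the reverse of $\pref_1$ (closer $=$ more preferred), and likewise a target radius $r_2(a)$ reversing $\pref_2$; and (b) realize each $a$ as an actual point of $\tspace$ lying at distance exactly $r_1(a)$ from $\pos(v_1)$ and exactly $r_2(a)$ from $\pos(v_2)$.

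For part (a), let $\rank_i(a) \in \{1,\dots,m\}$ be the position of $a$ in $\pref_i$ (rank $1$ is the most preferred), and set $r_i(a) \coloneqq 2 + \rank_i(a)/m$. Each $r_i$ is then injective and reverses $\pref_i$, and crucially every target radius lies in the short interval $[\,2+\tfrac{1}{m},\,3\,]$, whose length $1-\tfrac1m$ is strictly smaller than $D$.

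For part (b), recall that a point at prescribed distances $r_1,r_2$ from two centres at distance $D$ exists exactly when the circles of radii $r_1,r_2$ around those centres meet, i.e. when $|r_1-r_2|\le D\le r_1+r_2$, and that when both inequalities are strict the two circles cross transversally in two points symmetric about the line joining the centres. By the choice in part (a) we have $|r_1(a)-r_2(a)|\le 1-\tfrac1m<1=D$ and $r_1(a)+r_2(a)\ge 4+\tfrac2m>D$, so for each $a$ the two circles cross in two points, one above and one below the $x$-axis; let $\pos(a)$ be the one with positive second coordinate.

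Finally I would check the three things that make this a legal embedding: distinct alternatives get distinct pairs $(r_1(a),r_2(a))$ because each $\rank_i$ is injective, hence distinct points; every alternative has positive $y$-coordinate and distance $\ge 2+\tfrac1m>0$ from both voters, so no two points coincide and no alternative coincides with a voter; and for each voter $v_i$ and alternatives $a,b$ one has $a\pref_i b \iff \rank_i(a)<\rank_i(b) \iff r_i(a)<r_i(b) \iff \dis{\pos(a)-\pos(v_i)}_2<\dis{\pos(b)-\pos(v_i)}_2$, so $\pos$ is a $2$-dimensional Euclidean embedding of the profile. There is no genuine obstacle here; the only point requiring care is the tension in part (a) between spreading each voter's radii enough to respect a strict order and keeping them packed tightly enough (inside an interval of length $<D$) that every alternative can still be realized simultaneously against both voters — handled by compressing the radii into a narrow band bounded away from the origin.
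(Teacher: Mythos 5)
Your proof is correct. It differs from the paper's construction in its geometric mechanism, so a brief comparison is worthwhile. The paper also works from the ranks $\rank_1(a),\rank_2(a)$, but it places the alternatives at the integer grid points $(\rank_1(a),\rank_2(a))$ and pushes the voters far away along the axes, to $(-m^2,0)$ and $(0,-m^2)$; the key computation is the sandwich $m^2+\rank_1(a)\le \dis{\pos(a)-\pos(v_1)}_2< m^2+\rank_1(a)+1$, i.e., at that distance the Euclidean metric is close enough to the first coordinate that the rank order survives. You instead keep the voters at unit distance and \emph{prescribe} the two distances of each alternative outright, reducing realizability to the circle-intersection criterion $|r_1-r_2|\le D\le r_1+r_2$; the only design constraint is packing all radii into a band of width less than $D$ while keeping them injective, which your choice $r_i(a)=2+\rank_i(a)/m$ achieves. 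Your decomposition makes explicit a stronger fact that is only implicit in the paper: with two voters one can realize \emph{any} two prescribed distance orders simultaneously, because a pair of target radii is always jointly realizable once the radii are confined to a sufficiently narrow annulus relative to the voter separation. The paper's far-away-voter trick, on the other hand, is the one that scales to $d$ voters in $d$ dimensions (as in Bogomolnaia and Laslier), since it treats each voter's ranking as an independent coordinate; your circle-intersection argument is specific to two constraints in the plane. Both are complete and elementary; yours trades a small distance estimate for a small existence lemma.
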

\begin{proof}

  Let $\ppp \coloneqq (\aaa\coloneqq \{a_1, \ldots, a_m\}, \vvv\coloneqq \{v_1, v_2\}, \rrr\coloneqq (\pref_1, \pref_2))$ be a preference profile with two voters.
  We define an embedding as follows:
  Let $\pos(v_1)\coloneqq (-m^2,0)$, $\pos(v_2)\coloneqq(0,-m^2)$. %
  For each alternative $a\in \aaa$, 
  let \myemph{$\rank_1(a)$} (resp. \myemph{$\rank_2(a)$}) be the position of $a$ for voter $v_1$ (resp.\ $v_2$), i.e.,
  $\rank_i(a) \coloneqq |\{b\mid b \pref_i a\}|$ for $i\in [2]$.
  
  Define $\pos(a)\coloneqq (\rank_1(a), \rank_2(a))$.
  We now show that $\pos$ is a \dder[2] of $\ppp$.  
  To this end, we prove that for voter $v_1$ and an arbitrary alternative~$a\in \aaa$,
  we have
  \begin{align*}
    m^2+\rank_1(a) \leq\dis{\pos(a) - \pos(v_1)}_2< m^2+\rank_1(a)+1. 
  \end{align*}
 The lower bound is directly obtained by construction. 
 For the upper bound, we infer:
 \begin{align*}
   (m^2+\rank_1(a)+1)^2- \dis{\pos(a) - \pos(v_1)}^2
   &= (m^2+\rank_1(a)+1)^2 - \left((m^2+\rank_1(a))^2+ (\rank_2(a))^2\right) \\
   & = 2(m^2+\rank_1(a)) + 1 - (\rank_2(a))^2\\
   & > 0.
 \end{align*}
 The last inequality holds since $\rank_2(a) \le m$.
 Hence for each two alternatives $a$ and $b$ with $a \succ_1 b$, we have 
 \[\dis{\pos(a) - \pos(v_1)}_2<m^2+\rank_1(a)+1 \leq m^2+\rank_1(b) \leq \dis{\pos(b) - \pos(v_1)}_2.\]
 Thus $v_1$ is \dde[2] wrt.\ $\pos$.
 By symmetry of the construction, $v_2$ is also \dde[2] wrt.\ $\pos$.
\end{proof}

\begin{theorem}
  Any profile with at most three alternatives is \dde[2].
\end{theorem}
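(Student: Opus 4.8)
The plan is to fix a convenient placement of the (at most three) alternatives that is independent of the voters, and then to observe that every conceivable preference order over them is exactly the order induced by the Euclidean distances from some region of the plane; placing each voter inside the region matching its own order then yields a $2$-dimensional Euclidean embedding of the whole profile. Write $\ppp \coloneqq (\aaa, \vvv, \rrr)$ for the profile. If $|\aaa|\le 2$ the statement is immediate (put the at most two alternatives at distinct points and each voter strictly on its preferred side of their perpendicular bisector), so assume $\aaa = \{a_1, a_2, a_3\}$. Let $\pos(a_1), \pos(a_2), \pos(a_3)$ be the vertices of an equilateral triangle centered at the origin (any three non-collinear points would do). For $i<j$ let $\ell_{ij}$ be the perpendicular bisector of $\pos(a_i)\pos(a_j)$, so that a point $\pp$ lies strictly on the $a_i$-side of $\ell_{ij}$ exactly when $\dis{\pp - \pos(a_i)}_2 < \dis{\pp - \pos(a_j)}_2$. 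Since the three vertices are non-collinear, $\ell_{12}, \ell_{13}, \ell_{23}$ are pairwise non-parallel and share the circumcenter, hence form an arrangement of three distinct concurrent lines and split the plane into six open sectors. A point in the interior of a sector lies off all three bisectors, so it has three pairwise distinct distances to $\pos(a_1), \pos(a_2), \pos(a_3)$ and induces a single strict order on $\aaa$ (the same for all points of that sector). The crux is the claim that, as the sector ranges over all six, the induced order ranges over all $3! = 6$ strict orders.

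For this claim I would traverse a small circle once around the circumcenter, crossing the six bounding rays in cyclic order. Crossing a generic point of $\ell_{ij}$ flips the $a_i$-versus-$a_j$ comparison and changes neither of the other two comparisons (near that point one crosses no other bisector); and because the six bounding rays, taken in cyclic order, belong to the lines $\ell_1, \ell_2, \ell_3$ and then again $\ell_1, \ell_2, \ell_3$ (ordering the three lines by direction angle), the six sectors carry sign patterns obtained from a starting one by flipping one coordinate at a time, each coordinate being flipped exactly twice over the loop --- an arrangement that is readily checked to produce six pairwise distinct patterns. Each of these six patterns is, by construction, the comparison pattern of an honest point with three distinct distances, hence consistent with a strict order; since there are exactly six strict orders, the six sectors must realize all of them. (Alternatively, one may use the $D_3$-symmetry of the equilateral configuration: its symmetry group acts freely, hence regularly, both on the six sectors and on the six strict orders, and the map ``order induced by a sector'' is equivariant and therefore a bijection.)

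Granting the claim, the rest is bookkeeping: for each voter $v$ with preference order $\pref$, set $\pos(v)$ to be any point in the interior of the unique sector whose induced order is $\pref$. Then for every pair $a_i, a_j$ the voter $v$ prefers $a_i$ to $a_j$ precisely when $\dis{\pos(v) - \pos(a_i)}_2 < \dis{\pos(v) - \pos(a_j)}_2$, so $v$ is $2$-dimensional Euclidean with respect to $\pos$; as this holds for every voter, $\pos$ is a $2$-dimensional Euclidean embedding of $\ppp$. (The distinctness-of-preferences assumption guarantees the voters land in distinct sectors, but this is not needed.) The only step that is not pure bookkeeping is the combinatorial claim about the six sectors, and even that is mild: the observation that every realized sign pattern is automatically consistent and that only six consistent patterns exist lets one conclude without explicitly matching sectors to orders, so I do not expect a real obstacle --- only the need to set up the bisector-and-sector picture carefully.
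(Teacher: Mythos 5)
Your proposal is correct and follows essentially the same route as the paper: fix the three alternatives at the vertices of a non-degenerate triangle and place each voter in the region (sector of the perpendicular-bisector arrangement) realizing its preference order. The only difference is stylistic --- the paper verifies six explicit coordinates by direct computation, whereas you prove abstractly (via the cyclic flip/parity argument or the $D_3$-equivariance argument, both of which are sound) that the six sectors realize all six orders, which in fact substantiates the paper's passing remark that ``any set of non-degenerated coordinates assigned to the alternatives'' works.
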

\begin{proof}
 In fact, any set of non-degenerated coordinates assigned to the alternatives yields a feasible solution for the voters.
 Assume that $\aaa=\{a,b,c\}$. Let $\pos(a)=(0,2)$, $\pos(b)=(2,-1)$ and $\pos(c)=(-2,-1)$.
 Then consider each voter $v_i$ and assign $v_i$ to a point $\pos(v_i)$ according to its preference order:
 \begin{itemize}
  \item[--] If $a\pref_i b \pref_i c$, then let $\pos(v_i)\coloneqq (2,2)$.
  \item[--] If $b\pref_i a \pref_i c$, then let $\pos(v_i)\coloneqq (2,0)$.
  \item[--] If $b\pref_i c \pref_i a$, then let $\pos(v_i)\coloneqq (1,-1)$.
  \item[--] If $c\pref_i b \pref_i a$, then let $\pos(v_i)\coloneqq (-1,-1)$.
  \item[--] If $c\pref_i a \pref_i b$, then let $\pos(v_i)\coloneqq (-2,0)$.
  \item[--] If $a\pref_i c \pref_i b$, then let $\pos(v_i)\coloneqq (-2,2)$.
 \end{itemize}
It is a simple matter of computation to verify that this is indeed a \dder[2]. For example, with $v_i\colon a\pref_i b \pref_i c$, i.e., $\pos(v_i)=(2,2)$, we have $\dis{\pos(a) - \pos(v_i)}_2=2$, then $\dis{\pos(b) - \pos(v_i)}_2=3>2$, and finally $\dis{\pos(c) - \pos(v_i)}_2=5>3$. 
 \end{proof}

 \section{Existence of \dEuclid[2] profiles by experiments}\label{sec:exper}
 In this section, we show the following positive result by empirical study.
 
\begin{theorem}
  Any profile with at most three voters and at most seven alternatives is \dde[2].
\end{theorem}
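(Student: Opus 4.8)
The plan is to reduce the claim to a finite check and then discharge that finite check with a computer search whose output is verifiable by hand (or by an independent script). The first step is to bound the number of cases. With at most three voters and at most seven alternatives, the number of distinct preference profiles is finite; moreover, by symmetry we may assume exactly three voters (fewer voters follow from the earlier theorems by deleting voters, which never destroys $2$-Euclideanness) and exactly seven alternatives (a profile on fewer alternatives is obtained by deleting alternatives from one on seven, and a $2$-dimensional embedding restricts). So it suffices to certify every three-voter, seven-alternative profile. We further cut the count down by the natural symmetry group: relabelling alternatives, permuting the three voters, and reversing all preference orders simultaneously — this is the dihedral-type symmetry that makes the ``worst-diverse'' configurations manageable. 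After quotienting, one is left with a concrete, enumerable list of profiles up to isomorphism.

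Next I would describe the heuristic that produces candidate embeddings. For a fixed combinatorial profile, the existence of a $2$-dimensional Euclidean embedding is a feasibility question over a system of strict quadratic inequalities: for each voter $v_i$ and each consecutive pair $a \pref_i b$, the constraint $\dis{\pos(a)-\pos(v_i)}_2 < \dis{\pos(b)-\pos(v_i)}_2$ is, after squaring and cancelling the quadratic terms, \emph{linear} in the coordinates of $\pos(v_i)$ once the alternative positions are fixed, and symmetrically linear in the alternative positions once the voters are fixed. This bilinear structure suggests an alternating-minimisation / local-search heuristic: fix the alternatives, solve (or least-squares-fit) for the voters, then fix the voters and re-fit the alternatives, iterating while perturbing to escape local optima; a candidate is accepted when all strict inequalities hold with a positive margin. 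Running this over the (reduced) list of profiles yields, for each one, an explicit rational embedding.

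The final step is certification. For each profile in the list, we record the produced embedding as rational (or small-integer) coordinates and verify, by exact arithmetic on the squared distances, that every required strict inequality is satisfied; this makes the proof independent of the heuristic's correctness — only its output matters. The write-up would present the enumeration scheme, the symmetry reduction, the number of residual profiles, and point to a table or supplementary file listing one valid embedding per profile, together with the observation that deleting voters or alternatives propagates the result to all profiles with $n \le 3$ and $m \le 7$.

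The main obstacle I anticipate is the combinatorial explosion: there are $(7!)^3$ raw triples of preference orders over seven alternatives, and even after dividing by the symmetry group and discarding profiles with repeated voters this is a large number to handle, so the heuristic must be fast and the symmetry reduction aggressive enough to bring the residual count into a range where both the search and the independent verification are feasible. A secondary difficulty is robustness of the local search — for the genuinely ``hard'' profiles (those close to the $3$-voter, $8$-alternative non-Euclidean boundary implied by Bogomolnaia and Laslier) the feasible region of embeddings may be thin, so the heuristic may need careful seeding (e.g. starting from embeddings of related profiles) and enough restarts to find the tight configurations; but since only the final exact check is logically load-bearing, any embedding it eventually finds suffices.
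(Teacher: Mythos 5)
Your overall strategy is the same as the paper's: reduce to a finite set of profiles (using monotonicity under deletion of voters and alternatives to fix $n=3$, $m=7$, and relabelling to fix one order as $1\succ\cdots\succ 7$), find an explicit embedding for each profile by a heuristic search, and let only the exact a posteriori verification of the strict distance inequalities carry the logical weight. The paper's heuristic (incremental placement of alternatives in the intersection of annuli, with random restarts) differs from your alternating bilinear fit, but as you yourself note this is immaterial, since the certificate is the output, not the search procedure; the paper reports $12693241$ residual profiles and publishes the embeddings.

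One step of your reduction is genuinely wrong, though: you include ``reversing all preference orders simultaneously'' in the symmetry group you quotient by. Euclidean preferences are not closed under reversal. Already in one dimension the reversal of a single-peaked (hence \dEuclid[1]) profile is single-dipped and in general not single-peaked — e.g.\ $\{a\pref b\pref c,\ b\pref a\pref c,\ c\pref b\pref a\}$ is single-peaked but its reversal has every alternative ranked last by some voter — and no analogous closure is known for \dEuclid[2] profiles. If you keep only one representative of each pair $\{\ppp,\ppp^{\mathrm{rev}}\}$, the unchecked reversals receive no embedding and the enumeration is incomplete, so the theorem would not be established for them. The legitimate symmetries are permuting the voters and relabelling the alternatives (which is what yields the count $\binom{7!-1}{2}=12693241$ after fixing one order to the identity); drop the reversal and your argument matches the paper's.
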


\begin{proof}
  We present a greedy heuristic algorithm used to determine that all profiles with at most $3$ voters and at most $7$ alternatives are \dEuclid[2].
  Given a profile as input, the algorithm tries to create a 2-dimensional embedding of this profile. However, it may fail even when the input profile is actually \dEuclid[2].
  Following a similar line as in the work of Chen and Grottke~\cite{ChenGrottke2021}, we did some optimization to significantly shrink the search space on all profiles: We only consider profiles with exactly three distinct preference orders over exactly seven alternatives since the Euclidean property is monotone, and we assume that one of the preference orders is~$1 \succ \cdots \succ 7$.
   The number of relevant profiles with $3$ voters and $7$ alternatives is $12693241$.
  
  The heuristic is particularly simple: We first place the three voters randomly in a $2$-dimensional space. 
  Then, we pick each alternative successively (in random order), and compute the ``free area'', that is the set of points at which this alternative may be placed in order to satisfy the ranking of each voter with respect to earlier alternatives.
  If the free area is non-empty, then we randomly choose a point inside it to embed this alternative. Otherwise, we quit this branch and start over until a maximum number of iterations is reached. 

  The free area itself is computed as an intersection of $n$ annuli: each voter indeed has a lower- and upper-bound on the distance to each new alternative. It is straightforward to decide, given any point, whether it belongs to the free area or not. Also, the intersection points of the different circles, creating the ``corners'' of the free area, can easily be computed. Hence we choose a random point in the free area by trying successive points in the smallest disk containing the corner points. 
All generated profiles, together with their \dEuclid[2] embeddings and the distances used for the verification, are available online at \url{https://owncloud.tuwien.ac.at/index.php/s/DqjD6OxQFpI6S0u}.
\end{proof}

\section{Conclusion}\label{sec:conclude}
This work opens up several future research directions.
First, it would be interesting to have a combinatorial algorithm for finding a \dEuclid[2] embedding for any profile with $3$ voters and $7$ alternatives.
Second, following the research question by Chen and Grottke~\cite{ChenGrottke2021} for the one-dimensional case,
it would be interesting to characterize \dEuclid[2] preference profiles with up to $3$ voters. 
Last, it remains to close the gap for the $3$-dimensional case, i.e., determining the minimum number~$m$ of alternatives for which there is a non-\dEuclid[3] preference profiles with $4$ voters.

\bibliographystyle{abbrvnat}

\bibliography{bib}

\end{document}